\newtheorem{lemma}{Lemma}
\newtheorem{remark}{Corollary}
\newtheorem{proposition}{Proposition}
\begin{document}


\title{Robust Beamforming Design and Antenna Selection for Dynamic HRIS-aided MISO Systems}

\author{ Jintao Wang, Binggui Zhou, Chengzhi Ma, Shiqi Gong, Guanghua Yang, Shaodan Ma
\thanks{Jintao Wang, Binggui Zhou, Chengzhi Ma, and Shaodan Ma are with the State Key Laboratory of Internet of Things for Smart City and the Department of Electrical and Computer Engineering, University of Macau, Macau 999078, China (e-mail: wang.jintao@connect.um.edu.mo; binggui.zhou@connect.um.edu.mo; yc07499@um.edu.mo; shaodanma@um.edu.mo).}


\thanks{Shiqi Gong is with the School of Cyberspace Science and Technology, Beijing Institute of Technology, Beijing 100081, China (e-mail:gsqyx@163.com).}
\thanks{Guanghua Yang is with the School of Intelligent Systems Science and Engineering, Jinan University, Zhuhai 519070, China (e-mail: ghyang@jnu.edu.cn).}
}

\markboth{Journal of \LaTeX\ Class Files, September~2023}
{Wang \MakeLowercase{\textit{et al.}}: Robust Beamforming Design and Antenna Selection for Dynamic HRIS-aided MISO Systems}


\maketitle

\begin{abstract}

In this paper, we propose a dynamic hybrid active-passive reconfigurable intelligent surface (HRIS) to enhance multiple-input-single-output (MISO) communications, leveraging the property of dynamically placing active elements. Specifically, considering the impact of hardware impairments (HWIs), we investigate channel-aware configurations of the receive antennas at the base station (BS) and the active/passive elements at the HRIS to improve transmission reliability. To this end, we address the average mean-square-error (MSE) minimization problem for the HRIS-aided MISO system by jointly optimizing the BS receive antenna selection matrix, the reflection phase coefficients, the reflection amplitude matrix, and the mode selection matrix of the HRIS. To overcome the non-convexity and intractability of this problem, we first transform the binary and discrete variables into continuous ones, and then propose a penalty-based exact block coordinate descent (PEBCD) algorithm to alternately solve these subproblems. Numerical simulations demonstrate the significant superiority of our proposed scheme over conventional benchmark schemes.
\end{abstract}

\begin{IEEEkeywords} 
 Reconfigurable intelligent surface (RIS), hybrid active-passive RIS (HRIS), hardware impairments (HWIs), beamforming design, antenna selection, binary optimization
\end{IEEEkeywords}

\section{Introduction}
\lettrine[lines=2]{M}{ASSIVE} multiple-input multiple-output (MIMO) has been a revolutionary solution for enhancing spectral efficiency (SE) to meet the rapidly growing traffic demand for future wireless communication systems. However, a fully digital implementation of MIMO requires costly dedicated radio-frequency (RF) chains for each antenna, inevitably resulting in prohibitive hardware costs and power consumption.

To improve both cost and power efficiency, analog signal processing can be additionally introduced to reduce the number of RF chains \cite{zhai2017joint}. One common technique is the so-called hybrid analog-digital transceiver, widely used in recent massive MIMO systems. The signal undergoes initial processing by a low-dimensional digital beamformer before traversing through a high-dimensional analog beamformer, constructed using analog phase shifters. Generally, there are two typical structures for implementing analog signal processing. By connecting the RF chains to all antennas or a subset of the antenna array, fully-connected or sub-connected hybrid analog-digital transceivers are utilized \cite{molisch2017hybrid}. However, hybrid beamforming still requires a large number of phase shifters, which are generally associated with complicated circuits and high energy consumption in practice. Another low-cost alternative technique is antenna selection \cite{Gao2018Hybrid}, where only a subarray of antennas is activated with the limited RF chains via the RF switch network. Compared to phase shifters, analog RF switches have much lower complexity and power consumption, making antenna selection more appealing \cite{elbir2019joint}.

{\color{blue} On a parallel track, advances in metasurfaces have promoted a new paradigm in wireless communications \cite{ataloglou2023metasurfaces}. For example, authors in \cite{StackedRIS} propose stacked intelligent metasurfaces (SIM) at the transmitter and receiver to implement holographic multiple-input multiple-output (HMIMO) communications without requiring excessive RF chains. Although metasurface-aided transceivers have huge potential for high energy efficiency, they still face challenges in the absence of line-of-sight scenarios. On the other hand, reconfigurable intelligent surface (RIS) \cite{wu2021intelligent} has exhibited remarkable capability to dynamically reshape the wireless propagation environment and explore new physical dimensions of transmission.}
By deploying a massive number of low-cost passive elements, RIS can dynamically adjust the phase of the incident signal to create favorable wireless channels, thereby improving communication performance. However, due to the fully passive nature of RIS elements, the widespread adoption of RIS has been greatly constrained. For example, the performance improvement induced by RIS is limited due to the "multiplicative fading" effect, particularly in the presence of a direct link. 

To overcome this challenge, the hybrid active-passive RIS (HRIS) has been proposed to compensate for the severe cascaded path loss with additional power amplifiers (PAs) \cite{peng2023hybrid,sankar2023channel}. 
{\color{blue} Unlike the amplify-and-forward (AF) relay, HRIS operates similarly to conventional RIS by directly reflecting the incident signal with the desired adjustments at the electromagnetic (EM) level. In contrast, an AF relay typically requires larger and more power-hungry RF chains to receive and then transmit the signal with amplification. This process usually occurs at the baseband level and requires two time slots to complete the amplify-and-forward cycle. Even when operating in full-duplex mode, the AF relay increases hardware complexity due to the need to mitigate self-interference. Additionally, the AF relay simply amplifies the received signal without modifying the phase.}
In \cite{peng2023hybrid}, the authors investigated the optimal ratio between the number of active and passive elements to maximize the ergodic capacity under the total power budget. On the other hand, the authors in \cite{sankar2023channel} have studied the channel-aware placement of the active elements to enhance the signal-to-noise ratio (SNR). Unfortunately, these existing works focused on the ideal hardware implementation and ignored the impact of hardware impairments (HWIs).

In this paper, we investigate a dynamic HRIS-assisted MISO communication system under the non-ideal hardware implementation
The channel-aware placement of the base station (BS) receive antennas and the active/passive HRIS elements are optimized to enhance the system performance. 
To this end, we investigate the average mean-square-error (MSE) minimization problem under the power budget of HRIS by jointly optimizing the receive antenna selection matrix, the reflection phase coefficients, the reflection amplitude matrix, and the mode selection matrix for the active and passive elements.
To tackle the mix-integer optimization problem, a penalty-based exact block coordinate descent (PEBCD) algorithm is proposed. Numerical simulations show great superiority in the channel-aware configuration of the BS receive antennas and the active/passive HRIS elements as compared to the conventional benchmarks.

    \begin{figure}[t]
    \centering  
     \includegraphics[width=0.4\textwidth]{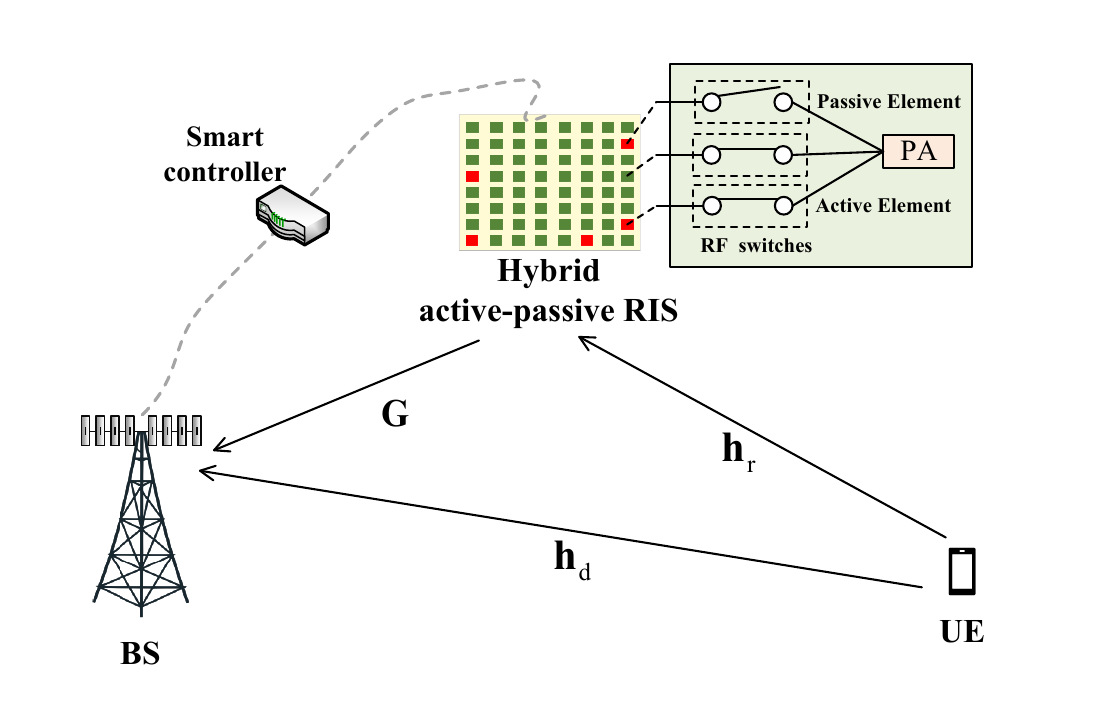}
     \vspace{-12pt}
        \caption{{\color{blue}A dynamic HRIS-assisted uplink MISO communication system.}}
        \vspace{-12pt}
        \label{system model}    
    \end{figure}

\section{System Model and Problem Formulation}
   
\subsection{System Model}
{\color{blue} As depicted in Fig.~\ref{system model}, a dynamic HRIS-aided uplink MISO communication system with a single-antenna user and an $N_R$-antenna BS is considered.
The HRIS comprises $N_a$ active and $N_p$ passive elements, where $N_a+N_p=N$. 
To reduce the power consumption and hardware complexity, we assume the BS receives the data stream from $N_R$ antennas with $L<N_R$ dedicated RF chains.
The antenna selection technique is employed to select $L$ out of $N_R$ antennas to explore the spatial diversity.}

We denote ${\bf{h}}_{d} \in \mathbb{C}^{N_R \times 1}$ and ${\bf{G}}\in \mathbb{C}^{N \times N_R}$ as the full channel state information (CSI) associated with $N_R$ receive antennas from the BS to the user and the HRIS, respectively. 
The channel between the HRIS and the user is ${\bf{h}}_{r} \!\! \in\!\! \mathbb{C}^{N \times 1}$.
Based on the receive antenna selection, $L$ out of $N_R$ or $N$ rows are selected to form the ${L \times 1}$ or ${L \times N}$ sub-channels, respectively.  
Denoting ${\bf{A}} \in \mathcal{A}$ as the index matrix for the $L$ selected antennas, where $\mathcal{A}\triangleq \{ {\bf{A}} \in \mathbb{C}^{L \times N_R}; A_{i,j} \in \{0,1 \}; \sum\nolimits_{j=1}^{N_R}A_{i,j}\!=\!1, \forall i; \sum\nolimits_{i=1}^{L}A_{i,j}\!\leq\! 1, \forall j \}$,  the sub-channels can be represented by ${\bf{A}}{\bf{h}}_{d} $ and ${\bf{A}}{\bf{G}} $.
All channels are assumed to experience quasi-static flat fading and perfectly available using the recent advances in RIS and MISO channel estimation.\footnote{\color{blue}{To delve into the unique effects of hardware impairments on the dynamic HRIS-assisted MISO communication system, we adopt the ideal CSI assumption, which is prevalent in contemporary studies \cite{CE1,shen2020beamforming}. The development of a channel estimation method falls outside the scope of this work and is therefore reserved for future exploration. It is crucial to highlight that the optimization framework we propose is designed to be easily adaptable to scenarios characterized by imperfect CSI. }}

{\color{blue} The dynamic HRIS, as shown in Fig.~1, is equipped with low-cost discrete phase shifters and RF switches. 
 To avoid the bulky circuits in practical implementation, we adopt the ``sub-connected architecture" for those active HRIS elements. This approach differs from the fully-connected one by employing a single power amplifier to serve all elements. The RF switches are used in conjunction with different phase-shift circuits, allowing HRIS elements to operate in either active or passive modes and independently control phase shifts while sharing a common amplification factor. }
The location and the number of active elements are optimized to enhance the system performance.
Specifically, we denote ${\bm{\Phi}} \triangleq {\rm diag}(e^{j \phi_1}, \dots,  e^{j \phi_N})$ as the reflection phase matrix, where $e^{j \phi_n}$ denotes the discrete phase shift of the $n$-th element with $\phi_n \in \mathcal{X}\triangleq\{0, \frac{2\pi}{2^B},\dots, \frac{2\pi(2^B-1)}{2^B}\}$. $B$ is the number of quantization bits. The reflection amplitude matrix of HRIS is denoted as ${\bf{\Lambda}}\triangleq {\rm diag}(\omega_1,\dots,\omega_N)$, where $\omega_n$ represents the amplification factor of $n$-th element. For the active elements, we have $\omega_n > {\mu}_{\rm min}$ when $n \in \mathcal{N}_{a}$, where ${\mu}_{\rm min}\geq 1$ denotes the minimum amplification factor and $\mathcal{N}_{a}$ is the index set of all active elements. 
Otherwise, we have $\omega_n =1$ for those passive elements. {\color{blue}With the same PA, all the active elements share one common amplification factor $\mu$.
We introduce a binary diagonal selection matrix ${\bf{B}}$ to establish the relationship between the HRIS elements and the PA, i.e., ${\bf{B}}={\rm diag}({\bm{\gamma}}) \in \mathcal{B}\triangleq \{ {\bm{\gamma}} \in \mathbb{C}^{N \times 1}; \gamma_i=1, \forall i \in\mathcal{N}_{a}; \gamma_i=0, \forall i \notin\mathcal{N}_{a} \}$.} 


Due to the non-ideal phase shifters of HRIS, the phase error of the $n$-th element can be modeled as ${\bar \phi}_n \in \mathcal{U}[-\frac{\pi}{2^B},\frac{\pi}{2^B}]$, where $\mathcal{U}[ \cdot]$ denotes the uniform distribution \cite{Wang2023HWI}. 
Therefore, the actual phase shift matrix with phase noise is $ {\bm{\tilde \Phi}}={{\bm{ \Phi}}}{\bm{\bar \Phi}}$, with ${\bm{\bar \Phi}}\triangleq {\rm diag}(e^{j {\bar \phi}_1}, \dots,  e^{j {\bar \phi}_N})$ denoting the phase noise matrix.
%
To accurately model the realistic imperfect hardware, the additive HWIs are considered. 
Specifically, the aggregate residual HWIs for the transceiver can be modeled as the additive Gaussian distortion noise \cite{gong2023hardware}.
As such, the actual transmitted signal for the user is expressed as
\begin{equation}
    {{\tilde s}}=\sqrt{p}{{s}}+{{\kappa}}_{t},
\end{equation}
where $s$ represents the transmitted symbol with $\mathbb{E}[s s^H]=1$. ${\bf{\kappa}}_{t} \!\sim\! \mathcal{CN}({{0}},{k_t^2}p)$ denotes the additive distortion noise whose variance is proportional to the transmit power $p$, where $k_t$ characterizes the distortion level. 

Then, the uplink received signal at the BS is written as
\vspace{-6pt}
\begin{align}
    {\bf{\tilde y}} = \underbrace{ {\bf{A}}({\bf{h}}_{d}\!+\!{\bf{G}}^H {\bf{\Lambda}} {\bm{\tilde \Phi}} {\bf{h}}_{r}){{\tilde s}} + {\bf{A}}{\bf{G}}^H {\bf{B}} {\bf{\Lambda}}{\bm{\tilde \Phi}} {\bf{n}}_a  + {\bf{n}}_{b} }_{{\bf{y}}} + {\bm{\kappa}}_{r},
\end{align}
where ${{\bf{y}}}$ denotes the un-distorted received signal and ${\bf{n}}_{b}$ stands for the additive white Gaussian noise (AWGN) whose elements are random variables with zero-mean and unit-variance, i.e., ${\bf{n}}_{b} \!\sim\! \mathcal{CN}({\bf{0}},{\sigma_b^2}{\bf{I}}_{L})$. 
 ${\bf{G}}^H {\bf{B}} {\bf{\Lambda}} {\bm{\tilde \Phi}}{\bf{n}}_a$ in ${{\bf{y}}}$ represents the amplified noise associated with the active elements, where ${\bf{n}}_a \!\sim\! \mathcal{CN}({\bf{0}},{\sigma_a^2}{\bf{I}}_N)$.
${\bm{\kappa}}_{r}$ stands for the receive distortion noise. We have ${\bm{\kappa}}_{r} \!\sim\! \mathcal{CN}(({\bf{0}},{k_r^2} {\rm \widetilde{diag}}(\mathbb{E}[{\bf{ y}}{\bf{ y}}^H ])) $, whose variance is proportional to the average received power of the individual antennas at the BS. 
{\color{blue}${\rm diag}(\cdot)$ returns a square diagonal matrix whose diagonal elements are the same as the input.}
$k_r$ represents the received distortion level and $\mathbb{E}[{\bf{ y}}{\bf{ y}}^H ]$ is derived as
\vspace{-6pt}
\begin{subequations}
\begin{align}
    \mathbb{E}[{\bf{y}}{\bf{y}}^H ]=& {\tilde p}{\bf{A}}\mathbb{E}_{{\bm{\bar \Phi}}}[{({\bf{h}}_{d}\!+\!{\bf{G}}^H {\bf{\Lambda}} {\bm{\tilde \Phi}} {\bf{h}}_{r}}){({\bf{h}}_{d}\!+\!{\bf{G}}^H {\bf{\Lambda}} {\bm{\tilde \Phi}} {\bf{h}}_{r})}^H]{\bf{A}}^H \nonumber \\
    & + \sigma_a^2{\bf{A}}  {\bf{G}}^H {\bf{B}} {\bf{\Lambda}}{\bm{\Phi}} {\bm{\Phi}}^H {\bf{\Lambda}}^H {\bf{B}}^H {\bf{G}}{\bf{A}}^H \!+\! \sigma_b^2 {\bf{I}}_{L}, \\
 \overset{(a)}{=} & {\bf{A}} {\bf{\Omega}} {\bf{A}}^H + \sigma_b^2 {\bf{I}}_{L}, 
 \end{align}
 \end{subequations}
 where ${\bf{\Omega}}={\tilde p} \left( {\bf{h}}{\bf{h}}^H \!+\! (1\!-\!\epsilon_b^2)  {\bf{G}}^H {\bf{\Lambda}} {\bm{\Phi}} {\rm \widetilde{diag}}({{\bf{h}}_{r}}{{\bf{h}}_{r}}^H) ({\bm{\Phi}} {\bf{\Lambda}})^H {\bf{G}}
 \right) + \sigma_a^2  {\bf{G}}^H {\bf{B}} {\bf{\Lambda}} {\bf{\Lambda}}^H {\bf{B}}^H {\bf{G}}$ with ${\tilde p}=p(1+k_t^2)$ and $\epsilon_b=\frac{\sin({\pi/2^B})}{ {\pi/2^B} }$. 
 ${\bf{h}}={\bf{h}}_{d}+\epsilon_b {\bf{G}}^H {\bf{\Lambda}} {\bm{\Phi}} {\bf{h}}_{r}$ denotes the average effective channel between the user and the BS.
 The equality $(a)$ holds since $\mathbb{E}[{\bm{\bar \Phi}} {\bf{A}} {\bm{\bar \Phi}}^H] = \epsilon_b^2 {\bf{A}} + (1\!-\!\epsilon_b^2) {\rm \widetilde{diag}}({\bf{A}})$ and $\mathbb{E}[{\bm{\bar \Phi}} {\bf{A}} ] = \epsilon_b {\bf{A}}$.

Assume that the receive beamforming at the BS is ${\bf{w}}$, then the estimated symbol can be denoted as 
\begin{align}
    {\hat s}=& \sqrt{p}{\bf{w}}^H  {\bf{A}}  {\bf{\tilde h}} {s} \!+\! {\bf{w}}^H  {\bf{A}}  {\bf{\tilde h}}{{\kappa}}_{t} \!+\! {\bf{w}}^H  {\bf{A}} {\bf{G}}^H {\bf{B}} {\bf{\Lambda}}{\bm{\tilde \Phi}}{\bf{n}}_a \nonumber \\
    &\!+\! {\bf{w}}^H{\bf{n}}_b  \!+\! {\bf{w}}^H{\bm{\kappa}}_{r},
\end{align}
with ${\bf{\tilde h}} ={\bf{h}}_{d}+{\bf{G}}^H {\bf{\Lambda}} {\bm{\tilde \Phi}} {\bf{h}}_{r}$.
Therefore, the average MSE\footnote{ {\color{blue}Due to the correlation between the minimum MSE and the minimum average bit error rate (BER), the MSE metric provide a elegant expression for measuring the quality of service compared to the BER, which is difficult or intractable due to their dependence on the integral $Q$ function \cite{tenenbaum2011mse}.}
} can be derived as
\begin{align}
    f_{\rm MSE} & = \mathbb{E}[({\hat s}-{s})^2]  = {\bf{w}}^H {\bf{Q}}  {\bf{w}}  - 2\sqrt{p}\Re\{ {\bf{w}}^H{\bf{A}} {\bf{h}} \} + 1,
\end{align}
with ${\bf{Q}}\!=\!{\bf{A}}{\bf{\Omega}} {\bf{A}}^H \!+\! k_r^2  {\rm \widetilde{diag}} ({\bf{A}}{\bf{\Omega}} {\bf{A}}^H )\!+\! {\tilde \sigma}_b^2  {\bf{I}}_{L}$ and ${\tilde \sigma}_b^2\!=\!\sigma_b^2(1+k_r^2)$.

\subsection{Problem formulation}
In this paper, we mainly focus on enhancing transmission reliability of the considered system in terms of MSE. Specifically, the average MSE is minimized under the total power budget $P_{\rm HRIS}$ of HRIS by jointly optimizing the receive antenna selection matrix ${\bf{A}}$, the mode selection matrix ${\bf{B}}$, the reflection phase matrix ${\bf{\Phi}}$, and the reflection amplitude matrix ${\bf{\Lambda}}$.  
The total power consumption at the HRIS is calculated as
    $P  \!=\!  \mathbb{E}[|{\bf{\Lambda}}{\bf{B}} {\bm{\tilde \Phi}} {\bf{h}}_{r} {\tilde s} |^2] \!+\!  \mathbb{E} [|  {\bf{B}} {\bf{\Lambda}} {\bm{\tilde \Phi}} {\bf{n}}_{a} |^2 ] =   {\tilde p} |{\bf{\Lambda}}{\bf{B}}  {\bf{h}}_{r}  |^2 + \sigma_a^2 |  {\bf{B}} {\bf{\Lambda}}  |^2$.
Then, the average MSE minimization problem can be formulated as
\begin{subequations} \label{Obj_1}
\begin{align} 
    {\textbf{(P1)}}:
    &{\min_{ \substack{ {\bf{w}},{\bf{A},{\bf{\Phi}}},{\bf{\Lambda}}, {\bf{B}} } }} \quad   f_{\rm MSE}  \\
    
     &~~~~~{\mbox{s.t.}} \quad \quad~ 
     {\tilde p} |{\bf{\Lambda}}{\bf{B}}  {\bf{h}}_{r}  |^2 + \sigma_a^2 |  {\bf{B}} {\bf{\Lambda}}  |^2 \leq P_{\rm HRIS}, \label{P1_power}\\
    
    &\quad \qquad \qquad \phi_n \in \mathcal{X},
    \label{P1_theta}\\
    
    & \quad \qquad \qquad \omega_n > \mu_{\rm min}, \forall n \in \mathcal{N}_{a}, \label{P1_ampfac} \\
    
    
    
    & \quad \qquad \qquad {\bf{A}} \in \mathcal{A}, {\bf{B}} \in \mathcal{B},  \label{P1_A} 
    
\end{align}
\end{subequations}
where \eqref{P1_theta} denotes the discrete phase coefficient constraint and \eqref{P1_ampfac} limits the amplification factor of the active elements. 
Unfortunately, the problem (P1) is intractable and NP-hard due to the integer constraints and tightly coupled variables. 
In the sequel, we first transform these intractable constraints and then present a PEBCD algorithm to obtain a stationary solution.

\vspace{-6pt} 
\section{Proposed PEBCD algorithm}
In this section, we begin by transforming the binary and discrete variables into continuous ones and then propose to optimize these variables with the others fixed alternately.
\vspace{-6pt}
\subsection{Problem Transformation}
Since the reflection amplitude matrix ${\bf{\Lambda}}\!=\!{\rm diag}({\bm{\omega}})$ and the mode selection matrix ${\bf{B}}\!=\!{\rm diag}({\bm{\gamma}})$ are tightly coupled, {\color{blue}we first introduce the auxiliary variable $ {\bm{\tilde \omega}}={\rm diag}({\bf{\Lambda}}{\bf{B}})$, representing the reflection amplification matrix related to those active HRIS elements, as follows}
\begin{align}
{{\tilde \omega}}_i = \left\{\begin{array}{l}
{\mu} \geq \mu_{\rm min}, \quad  i \in \mathcal{N}_{a}, \\ 0, \quad i \notin \mathcal{N}_{a}, 
\end{array}\right.
\end{align}
where $\mu$ denotes the common amplification factor for the active elements.
Thus, we have ${\bm{\tilde \omega}}=\mu{\bm{\gamma}}$ and ${\bm{ \omega}}=(\mu-1){\bm{\gamma}} + {\bm{1}}$.
The power constraint in \eqref{P1_power} can be transformed into 
   $\mu^2 {\bm{\gamma}}^H ({\tilde p} { \rm \widetilde{diag}}({\bf{h}}_r{\bf{h}}_r^H) +\sigma_a^2 {\bf{I}}) {\bm{\gamma}} \leq P_{\rm HRIS}$.
Therefore, we equivalently optimize the common amplification factor $\mu$ and the mode selection vector ${\bm{\gamma}}$ in the following.

For the receive antenna selection matrix ${\bf{A}}$, we define ${\bf{a}} \triangleq {\rm vec}({\bf{A}}^T)$ with ${\bf{a}}=\left[ {\bf{a}}_1^T, {\bf{a}}_2^T, \dots, {\bf{a}}_L^T \right]^T$ and ${\bf{a}}_i \neq {\bf{a}}_j, \forall i \neq j$.
${\bf{a}}_i \in \mathcal{S}_{a}\triangleq \{ {\bf{a}}\in \mathbb{R}^{N_r} | {\bf{1}}^T{\bf{a}}=1, [{\bf{a}}]_m \in \{0,1\},\forall m \}$ denotes each ${\bf{a}}_i$ belongs to a special ordered set of type 1 (SOS1).   
On the other hand, to efficiently deal with the discrete phase coefficients, we adopt the binary representation for the discrete variable ${\bm{\theta}}\triangleq {\rm diag}({\bm{\Phi}})$. 
Defining 
${\bm{\theta}}_{s}$ as ${\bm{\theta}}_{s}=[1,e^{j\frac{2\pi}{2^B}},\dots, \frac{2\pi(2^B-1)}{2^B}]^T$, we can rewrite ${\bm{\theta}}$ as ${\bm{\theta}}={\bf{Z}}{\bm{\theta}}_{s}$, where ${\bf{Z}}=[{\bf{z}}_1,{\bf{z}}_2,\dots,{\bf{z}}_N]\in\mathbb{C}^{N\times2^B}$. Each ${\bf{z}}_n$ belongs to the SOS1, i.e., ${\bf{z}}_n\in \mathcal{S}_{z}\triangleq \{ {\bf{z}}\in \mathbb{R}^{2^B} | {\bf{1}}^T{\bf{z}}=1, [{\bf{z}}]_m \in \{0,1\},\forall m \}$. Similar with the receive antenna selection matrix ${\bf{A}}$, we further define 
${\bf{z}} \triangleq {\rm vec}({\bf{Z}}^T)$ with ${\bf{z}}=\left[ {\bf{z}}_1^T, {\bf{z}}_2^T, \dots, {\bf{z}}_L^T \right]^T$.

The binary variables $\{ {{\bm{\gamma}},{\bf{a}}, {\bf{z}}} \}$ still render the problem intractable and NP-hard. 
Though the optimal solution can be obtained by the state-of-the-art Gurobi solver embedded with the branch-and-bound (BB) method, the worst-case complexity increases exponentially with the scale of the problem. 
To balance the complexity and optimality, we propose a variational reformulation of the binary constraints. Specifically, these constraints can be equivalently transformed into the $l_2$ box non-separable constraints with the following lemma.
\begin{lemma} \label{binary}
Assume ${\bf{x}} \in \mathbb{R}^N,{\bf{y}} \in \mathbb{R}^N$ and define $\chi \triangleq \{ ({\bf{x}},{\bf{y}}) | {\bf{-1}} \leq {\bf{x}} \leq {\bf{1}}, ||{\bf{y}}||_2^2 \leq N, {\bf{x}}^T{\bf{y}}=N, \forall {\bf{y}} \}$. Assume that ${\bf{x}},{\bf{y}} \in \chi$, then we have ${\bf{x}} \in \{-1,1\}^N$, and ${\bf{x}}={\bf{y}}$.
\end{lemma}
\begin{proof}
See \cite{yuan2017exact} for detailed proof.
\end{proof}

Based on Lemma~\ref{binary} and introducing the auxiliary variables $\{{\bf{\tilde u}},{\bf{\tilde v}}, {\bf{\tilde q}}\}$, the problem (P1) equivalently turns into the following mix-integer optimization problem:

\begin{subequations} \label{Obj_3}
\begin{align} 
    {\textbf{(P2)}}:\!\!\! \min_{ \substack{ {\bf{w}},{\mu},{\bf{v}}, {\bf{u}}, {\bf{q}} \\{\bf{0}} \leq \{ {\bm{\gamma}},{\bf{a}},{\bf{z}} \} \leq {\bf{1}}   }} \!\!\!   &f_{\rm MSE}  \\
    
     \qquad \mbox{s.t.} \quad   
     &\mu^2 {\bm{\gamma}}^H ({\tilde p} { \rm \widetilde{diag}}({\bf{h}}_r{\bf{h}}_r^H) \!+\! \sigma_a^2 {\bf{I}}) {\bm{\gamma}} \leq P_{\rm HRIS}, \label{Power_HRIS} \\ 
     
      \quad \qquad  \quad &{\bf{a}}_i \neq {\bf{a}}_j, \forall i \neq j,\{i,j\}\in \mathcal{L}, \label{a_01}  \\
     
      \quad \qquad  \quad &\mu >\mu_{\rm min}, \label{mu_HRIS} \\
     
    \quad \qquad  \quad &{\bm{\tilde \gamma}}^T  {\bf{\tilde u}} \!=\! N,{\bf{\tilde a}}^T  {\bf{\tilde v}} \!=\! LN_r, {\bf{\tilde z}}^T  {\bf{\tilde q}} \!=\! 2^BN, \label{P3_E1} \\

    \quad \qquad  \quad &||{\bf{\tilde u}}||_2^2 \!\leq\!  N,||{\bf{\tilde v}}||_2^2 \!\leq\!  LN_r, ||{\bf{\tilde q}}||_2^2 \!\leq\!  2^BN, \label{P3_InE1} \\

    \quad \qquad  \quad &{\bf{1}}^T{\bf{z}}_n = 1,\forall n\in \mathcal{N}; {\bf{1}}^T{\bf{a}}_i = 1,\forall i\in \mathcal{L}, \label{P3_E2}
\end{align}
\end{subequations}
where ${\bm{\tilde \gamma}}\triangleq2{\bm{\gamma}}-{\bf{1}}$ and $\{ {\bf{\tilde a}},{\bf{\tilde z}},{\bf{\tilde v}}, {\bf{\tilde u}}, {\bf{\tilde q}}\}$ are similarly defined w.r.t. $\{ {\bf{a}},{\bf{z}},{\bf{v}}, {\bf{u}}, {\bf{q}}\}$.

To efficiently tackle the equality constraints in \eqref{P3_E1}, we penalize the complementary error directly by a penalty function. The resultant objective function is defined as
\begin{align}
    \mathcal{L}_{\rho} & =  f_{\rm MSE} + \mathcal{J}_{\rho},
\end{align}
where the penalty term $\mathcal{J}_{\rho}={\rho}[ (N\!-\!{\bm{\tilde \gamma}}^T  {\bf{\tilde u}}) \!+\! ( LN_r\!-\!{\bf{\tilde a}}^T  {\bf{\tilde v}}) \!+\! 
    ( 2^BN \!-\!{\bf{\tilde z}}^T  {\bf{\tilde q}}) ]$.
Thus, the problem (P3) turns into
\begin{align}  \label{Obj_4}
    {\textbf{(P3)}}: &\min_{ \substack{ {\bf{w}},{\mu},{\bf{v}}, {\bf{u}}, {\bf{q}} \\{\bf{0}} \leq \{ {\bm{\gamma}},{\bf{a}},{\bf{z}} \} \leq {\bf{1}}   }}     \mathcal{L}_{\rho} \quad  \mbox{s.t.} ~   
     \eqref{Power_HRIS},\eqref{a_01},\eqref{mu_HRIS}, \eqref{P3_InE1},\eqref{P3_E2}.
\end{align}
In the sequel, we alternatively optimize these variables with the remaining ones fixed.

\vspace{-8pt}
\subsection{Receive Beamforming}
For any given $\{{\mu},{\bf{v}}, {\bf{u}}, {\bf{q}},  {\bm{\gamma}},{\bf{a}},{\bf{z}}  \}$, the linear MMSE detector is the optimal receive beamforming. 
Thus, the optimal receive beamforming ${\bf{w}}^{\star}$ is written as 
\begin{align}\label{receive beamforming}
    {\bf{w}}^{\star} = \sqrt{p} \left({\bf{A}}{\bf{\Omega}} {\bf{A}}^H \!+\! k_r^2  {\rm \widetilde{diag}} ({\bf{A}}{\bf{\Omega}} {\bf{A}}^H )\!+\! {\tilde \sigma}_b^2  {\bf{I}}_{L} \right) ^{-1} {\bf{A}} {\bf{h}}.  
\end{align}

\vspace{-8pt}
\subsection{Amplification Factor }
In this subsection, we optimize the common amplification factor for the active elements of the HRIS with the other variables fixed. 
Since the penalty term $\mathcal{J}_{\rho}$ is irrelevant to the amplification factor $\mu$, we first rewrite the objective $f_{\rm MSE}$  as
     $f_{\rm MSE}  \!=\! f_{{\bm{\omega}},{\bm{\tilde \omega}}} \!+\! {\tilde p} {\bf{h}}_{d}^H {\bf{X}} {\bf{h}}_{d} \!+\! {\tilde \sigma}_b^2{\bf{w}}^H{\bf{w}} \!-\! 2\sqrt{p} \Re \{ {\bf{h}}_{d}^H {\bf{A}}^H {\bf{w}} \} \!+\! 1$
utilizing the matrix identity of  ${\rm Tr}({\rm {diag}}({\bf{a}})^H {\bf{B}} ) \!=\!{\bf{a}}^H {\rm {diag}}  ({\bf{B}}) $ and ${\rm Tr}({\bf{A}}{\rm {diag}}({\bf{b}}){\bf{C}}{\rm {diag}}({\bf{b}})^H )\!=\!{\bf{b}}^H ({\bf{C}}^T \odot {\bf{A}}){\bf{b}}$.
And $f_{{\bm{\omega}},{\bm{\tilde \omega}}}$ is defined as 
\begin{align}
    f_{{\bm{\omega}},{\bm{\tilde \omega}}} \! =\!   {\bm{\tilde \omega}}^H {\rm \widetilde{diag}}( \sigma_a^2 {\bf{G}}{\bf{X}}{\bf{G}}^H ){\bm{\tilde \omega}} \!+\! {\bm{\omega}}^H {\bf{K}} {\bm{\omega}} \!+\! 2\Re \{ {\bm{\omega}}^H {\bf{k}}\},
\end{align}
with the auxiliary terms ${\bf{k}}={\tilde p}\epsilon_b {\rm diag}({\bf{\Phi}}^H {\bf{G}} {\bf{X}} {\bf{h}}_{d}{\bf{h}}_{r}^H) - \sqrt{p} \epsilon_b  {\rm diag}({\bf{\Phi}}^H {\bf{G}} {\bf{A}}^H {\bf{w}}{\bf{h}}_{r}^H)$ and ${\bf{K}}=({\bf{\Phi}}^H {\bf{G}}{\bf{X}}{\bf{G}}^H {\bf{\Phi}}) \odot ({\tilde p}\epsilon_b^2 {{\bf{h}}_{r}} {{\bf{h}}_{r}}^H + {\tilde p}(1-\epsilon_b^2){\rm \widetilde{diag}}({{\bf{h}}_{r}}{{\bf{h}}_{r}}^H))^T$.
${\bf{X}}$ is defined as
${\bf{X}}  ={\bf{A}}^H ( {\bf{w}} {\bf{w}}^H + k_r^2 {\rm \widetilde{diag}} ({\bf{w}} {\bf{w}}^H) ) {\bf{A}}$.

Substituting ${\bm{\tilde \omega}}=\mu{\bm{\gamma}}$ and ${\bm{ \omega}}=(\mu\!-\!1){\bm{\gamma}} \!+\! {\bm{1}}$ into $f_{{\bm{\omega}},{\bm{\tilde \omega}}}$, it can be further transformed to $f_{{\bm{\gamma}},{\mu}}$ as follows:
\begin{align} \label{f_gamma_mu}
    f_{{\bm{\gamma}},{\mu}}  =& \sigma_a^2{\mu}^2{\bm{\gamma}}^H {\rm \widetilde{diag}}(  {\bf{G}}{\bf{X}}{\bf{G}}^H ){\bm{\gamma}}\! + \!({\mu}\!-\!1)^2{\bm{\gamma}}^H  {\bf{K}} {\bm{\gamma}} \nonumber \\
    & ~ \!+\! 2({\mu}\!-\!1) {\bm{\gamma}}^H \Re \{ {\bf{K}}{\bf{1}}\!+\!{\bf{k}}\} \!+\! {\bf{1}}^T{\bf{K}}{\bf{1}} \!+\!  2\Re \{ {\bf{k}}^H{\bf{1}}\}.
\end{align}
With the fixed ${\bm{\gamma}}$, the problem for optimizing $\mu$ at $(t\!+\!1)$-th iteration can be formulated as
\begin{align} 
     {\textbf{(P4)}}: \min_{{\mu}>{\mu}_{\rm min}} \quad &  a \mu^2 + 2b \mu  \quad  
    \mbox{s.t.} \quad  \mu \leq \mu_{\rm ref},
\end{align}
where the parameters $a,b$ and $\mu_{\rm ref}$ are computed as
\begin{subequations}
\begin{align}
    a & =   \sigma_a^2 {\bm{\gamma}}_{t}^H {\rm \widetilde{diag}}({\bf{G}}{\bf{X}}{\bf{G}}^H) {\bm{\gamma}}_{t} + {\bm{\gamma}}_{t}^H{\bf{K}}{\bm{\gamma}}_{t}, \\
    b & =  {\bm{\gamma}}_{t}^H \Re \{ {\bf{K}}{\bf{1}}+{\bf{k}}\} - {\bm{\gamma}}_{t}^H{\bf{K}}{\bm{\gamma}}_{t},  \\
    \mu_{\rm ref} & = \sqrt{\frac{ P_{\rm HRIS} } { {\bm{\gamma}}_{t}^H ({\tilde p} { \rm \widetilde{diag}}({\bf{h}}_r{\bf{h}}_r^H) +\sigma_a^2 {\bf{I}}) {\bm{\gamma}}_{t} }}. 
\end{align}  
\end{subequations}
{\color{blue}The $\mu_{\rm ref}$ represents the maximum amplification factor that is constrained by the transmit power allocated to the active elements at the current iteration.}
Then we have the following proposition concerning the amplification factor selection:
\begin{proposition}
The HRIS should employ passive elements only when the power budget $P_{\rm HRIS}$ satisfies
\begin{align}
     P_{\rm HRIS} < P_{\rm min}=\mu_{\rm min}^2  ({\tilde p} |{{h}}_{r}^{\rm min}|^2 +\sigma_a^2 ),
\end{align}
with $|{{h}}_{r}^{\rm min}|\!=\!\min \{ |{h}_{r,n}|\}$ and employ active elements otherwise.
\end{proposition}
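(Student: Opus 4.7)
The plan is to argue from the power consumption constraint \eqref{Power_HRIS}, by identifying the minimum power required to make even a single active element feasible. The threshold $P_{\rm min}$ is precisely this minimum feasible power, so the statement reduces to a feasibility dichotomy.

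First I would assume that the HRIS contains at least one active element, i.e., there exists some $n$ with $\gamma_n=1$. By constraint \eqref{mu_HRIS}, the shared amplification factor obeys $\mu\ge\mu_{\rm min}$. Expanding the left-hand side of \eqref{Power_HRIS} as
\begin{equation}
\mu^2 {\bm{\gamma}}^H ({\tilde p} {\rm \widetilde{diag}}({\bf{h}}_r{\bf{h}}_r^H) +\sigma_a^2 {\bf{I}}) {\bm{\gamma}} = \mu^2 \sum_{n\in\mathcal{N}_a} \bigl({\tilde p}|h_{r,n}|^2+\sigma_a^2\bigr),
\end{equation}
one sees that every summand is positive, so the total is minimized by taking $|\mathcal{N}_a|=1$ and placing that single active element at the index attaining $|h_r^{\rm min}|$. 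Combining with $\mu=\mu_{\rm min}$ yields the lower bound $\mu_{\rm min}^2({\tilde p}|h_r^{\rm min}|^2+\sigma_a^2)=P_{\rm min}$ on the power consumption whenever any active element is present.

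Next I would use the contrapositive to finish the necessity direction: if $P_{\rm HRIS}<P_{\rm min}$, then no configuration with at least one active element can satisfy \eqref{Power_HRIS}, hence every feasible HRIS must be fully passive. For the converse direction, whenever $P_{\rm HRIS}\ge P_{\rm min}$ the tuple obtained above is feasible, so the feasible set strictly enlarges that of the fully passive case; because enabling an active element introduces a nontrivial amplification $\mu\ge\mu_{\rm min}\ge 1$ that strengthens the effective channel ${\bf h}={\bf h}_d+\epsilon_b{\bf G}^H{\bf\Lambda}{\bm\Phi}{\bf h}_r$ in the cross term $-2\sqrt{p}\Re\{{\bf w}^H{\bf A}{\bf h}\}$ of $f_{\rm MSE}$, the optimal solution will activate at least one element.

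The main obstacle is the second direction: strictly speaking, the reduction in MSE from activating an element depends on the interplay between the amplified useful signal and the amplified noise term $\sigma_a^2{\bf G}^H{\bf B}{\bf \Lambda}{\bf\Lambda}^H{\bf B}^H{\bf G}$ inside ${\bf\Omega}$, so a full monotonicity argument is delicate. A clean way to sidestep this is to argue only the feasibility content of the proposition — namely that $P_{\rm HRIS}<P_{\rm min}$ forces the all-passive choice — and to treat the ``otherwise'' clause as the statement that the active mode becomes available (and, by the enlargement of the feasible set, weakly preferable) above the threshold. This keeps the proof short and rigorous while matching the role the proposition plays in motivating the PEBCD updates.
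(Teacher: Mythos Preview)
Your argument is correct and matches the paper's own proof almost exactly: the paper simply observes that the HRIS power consumption is minimized when $\mu=\mu_{\rm min}$ and a single element is active, giving $P=\mu_{\rm min}^2({\tilde p}|h_{r,n}|^2+\sigma_a^2)\ge P_{\rm min}$, which is precisely your feasibility computation. Your additional discussion of the converse direction goes beyond what the paper actually proves --- the paper's proof stops at the feasibility threshold and does not argue that activation is strictly beneficial above it --- so your instinct to treat the ``otherwise'' clause as a feasibility/availability statement rather than a strict monotonicity claim is well aligned with what the paper itself does.
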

\begin{proof}
    The consumed power at the HRIS achieves the minimum when the amplification factor is equal to $\mu_{\rm min}$ and the number of active elements equals 1. Thus, the consumed power at the HRIS satisfies $P\!=\!\mu_{\rm min}^2  ({\tilde p} |{{h}}_{r,n}|^2 +\sigma_a^2 )\!\geq\! P_{\rm min}$.
\end{proof}
Then, we consider the non-trivial case with $ P_{\rm HRIS} \geq P_{\rm min} $, where the number of active elements $N_{\rm act}>0$. 
The optimal amplification factor for the unconstrained minimization in problem (P4) can be derived as
\begin{align}
    \tilde \mu = \frac{ {\bm{\gamma}}^H{\bf{K}}{\bm{\gamma}}-{\bm{\gamma}}^H \Re \{ {\bf{K}}{\bf{1}}+{\bf{k}}\} }{ \sigma_a^2 {\bm{\gamma}}^H {\rm \widetilde{diag}}({\bf{G}}{\bf{X}}{\bf{G}}^H) {\bm{\gamma}} + {\bm{\gamma}}^H{\bf{K}}{\bm{\gamma}}}.
\end{align}

Thus, the optimal $\mu^{\star}$ for problem (P4) can be obtained by comparing $\tilde \mu$ and $\mu_{\rm ref}$ as follows:
\begin{align}\label{mu opt}
    \mu^{\star} = \left\{\begin{array}{l}
\mu_{\rm min}, \quad  \tilde \mu \leq \mu_{\rm min}, 
\\ \tilde \mu , \quad   \mu_{\rm min}<\tilde \mu<\mu_{\rm ref}, 
\\  \mu_{\rm ref},  \quad   \tilde \mu \geq \mu_{\rm ref}.
\end{array}\right.
\end{align}

\begin{remark}
For the first two cases with $\tilde \mu \leq \mu_{\rm min}$ and $   \mu_{\rm min}<\tilde \mu<\mu_{\rm ref} $, the power constraint of the active elements is inactive, indicating the power budget at the HRIS is sufficient. The system performance will degrade if the amplification factor is greater than the threshold $\mu^{\star}$ due to the amplified noise and hardware impairments.
\end{remark}
\begin{remark}
The power constraint becomes active in the case of $\tilde \mu>\mu_{\rm ref}$. Thus, the system performance can be further enhanced with a larger power budget.
\end{remark}

\vspace{-8pt}
\subsection{Auxiliary Variables}
For the auxiliary variables ${ {\bf{u}}, {\bf{v}},{\bf{q}}}$, the problem turns into the following individual convex optimization problems:
\vspace{-3pt}
\begin{subequations}\label{aux var}
\begin{align}
    {\bf{u}} & := \arg \max_{ ||2{\bf{u}}-{\bf{1}}||_2^2\leq N }   (2{\bm{\gamma}}-{\bf{1}})^T (2{\bf{u}}-{\bf{1}}),   \label{L_u} \\
    
    {\bf{v}} & := \arg \max_{ ||2{\bf{v}}-{\bf{1}}||_2^2\leq LN_r } (2{\bf{a}}-{\bf{1}})^T (2{\bf{v}}-{\bf{1}}),  \label{L_v} \\
    
    {\bf{q}} & := \arg \max_{ ||2{\bf{q}}-{\bf{1}}||_2^2\leq 2^BN } (2{\bf{z}}-{\bf{1}})^T (2{\bf{q}}-{\bf{1}}). \label{L_gamma} 
\end{align}
\end{subequations}

The optimal closed-form solutions ${ {\bf{u}}^{\star}, {\bf{v}}^{\star},{\bf{q}}}^{\star}$ are readily derived as
\vspace{-6pt}
\begin{subequations}
\begin{align}
    {\bf{u}}^{\star} & ={\sqrt{N}} ({\bm{\gamma}}-0.5{\bf{1}})/{||2{\bm{\gamma}}-{\bf{1}} ||} + 0.5{\bf{1}}, \\
    {\bf{v}}^{\star} & ={\sqrt{LN_r}} ({\bf{a}}-0.5{\bf{1}})/{||2{\bf{a}}-{\bf{1}} ||} + 0.5{\bf{1}}, \\
    {\bf{q}}^{\star} & ={\sqrt{2^BN}} ({\bf{z}}-0.5{\bf{1}})/{||2{\bf{z}}-{\bf{1}} ||} + 0.5{\bf{1}}.
\end{align}
\end{subequations}

\vspace{-8pt}
\subsection{Selection Variables}
In this subsection, we alternatively optimize the dynamic mode selection vector ${\bm{\gamma}}$, the receive antenna selection vector ${\bf{a}}$ and the reflection phase selection vector ${\bf{z}}$ using the same optimization oracles. 

\subsubsection{Dynamic mode selection}

Based on the transformed objective function $f_{{\bm{\gamma}},{\mu}}$ in \eqref{f_gamma_mu}, the dynamic mode selection problem can be equivalently formulated as the following convex quadratic optimization problem with the quadratic and linear constraints:
\begin{align}\label{mode sel} 
    {\textbf{(P5)}}: \min_{ {\bm{\gamma}} } \quad &  {\bm{\gamma}}^T {\bf{E}}_{1} {\bm{\gamma}} + {\bm{\gamma}}^T {\bf{e}}\!-\!{\rho}(2{\bm{\gamma}}\!-\!{\bf{1}})^T(2{\bf{u}}-{\bf{1}}) \nonumber \\ 
    \mbox{s.t.} \quad &   {\bm{\gamma}}^T {\bf{E}}_{2} {\bm{\gamma}} \leq P_{\rm HRIS}, {\bf{0}}\leq {\bm{\gamma}} \leq {\bf{1}},
\end{align}
with ${\bf{E}}_{1}= {\mu}^2 \sigma_a^2 {\rm \widetilde{diag}}({\bf{G}}{\bf{X}}{\bf{G}}^H)  + ({\mu}-1)^2  {\bf{K}}$, ${\bf{E}}_{2}= \mu^2  ({\tilde p} { \rm \widetilde{diag}}({\bf{h}}_r{\bf{h}}_r^H) +\sigma_a^2 {\bf{I}})$ and ${\bf{e}}= \Re \{ 2({\mu}-1)({\bf{K}}{\bf{1}}+{\bf{k}})\}$.

\subsubsection{Receive antenna selection}
With the matrix identity of ${\rm Tr}({\bf{A}}{\bf{B}}{\bf{C}}{\bf{D}})=({\rm vec}({\bf{D}}^T))^T ({\bf{C}}^T \otimes {\bf{A}}){\rm vec}({\bf{B}})$ and ${\rm Tr}({\bf{A}}{\rm \widetilde{diag}}({\bf{B}}) ) = {\rm Tr}({\rm \widetilde{diag}}({\bf{A}}) {\bf{B}} )$, the objective $f_{\rm MSE}$ can be equivalently transformed as 
$f_{\rm MSE}=f_{{\bf{a}}}+ {\tilde \sigma}_b^2{\bf{w}}^H{\bf{w}} + 1$
where $ f_{{\bf{a}}} = {\bf{a}}^T {\bf{M}} {\bf{a}} - 2\Re\{ {\bf{a}}^T {\bf{m}}\} $. ${\bf{M}}$ and ${\bf{m}}$ are defined as $ {\bf{M}}=({\bf{w}}{\bf{w}}^H)^T \otimes {\bf{\Omega}}+ k_r^2 {\rm \widetilde{diag}} ({\bf{w}} {\bf{w}}^H ) \otimes {\bf{\Omega}} $ and ${\bf{m}}=\sqrt{p} {\rm vec}({\bf{h}}{\bf{w}}^H )$.
Then, the receive antenna selection  optimization problem can be formulated as the following convex optimization problem with the quadratic objective and linear constraints:
\begin{align} \label{ant sel}
   {\textbf{(P6)}}: \min_{{\bf{a}} } 
    
    \quad &  {\bf{a}}^T {\bf{M}} {\bf{a}} \!-\! 2\Re\{ {\bf{a}}^T {\bf{m}}\} \!-\!{\rho}(2{\bf{a}}\!-\!{\bf{1}})^T(2{\bf{v}}-{\bf{1}})  \nonumber \\
    
    \mbox{s.t.} 
    
    \quad & \eqref{a_01},{\bf{0}}\leq {\bf{a}} \leq {\bf{1}},{\bf{1}}^T{\bf{a}}_i = 1,\forall i\in \mathcal{L}.
\end{align}

\subsubsection{Reflection phase selection}
For the reflection phase optimization, we first transform the objective function into a tractable form w.r.t. ${\bm{\theta}}$ and then rewrite it using the binary representative variables ${\bf{z}}$.
We can decouple the variable ${\bm{\theta}}$ from the objective function $f_{\rm MSE}$ in \eqref{Obj_3} as 
        $f_{\rm MSE}  = f_{{\bm{\theta}}} \!+\!g$,
where $g$ denotes the constant term w.r.t. ${\bm{\theta}}$.
$f_{{\bm{\theta}}}$ is defined as
$f_{{\bm{\theta}}}  = {\bm{\theta}}^H {\bf{N}} {\bm{\theta}} + 2\Re \{ {\bm{\theta}}^H {\bf{n}}\}$
where  ${\bf{n}}={\tilde p}\epsilon_b {\rm diag}({\bf{\Lambda}}^H {\bf{G}} {\bf{X}} {\bf{h}}_{d}{\bf{h}}_{r}^H) - \sqrt{p} \epsilon_b  {\rm diag}({\bf{\Lambda}}^H {\bf{G}} {\bf{A}}^H {\bf{w}}{\bf{h}}_{r}^H)$ and   
$
{\bf{N}}  =({\bf{\Lambda}}^H {\bf{G}}{\bf{X}}{\bf{G}}^H {\bf{\Lambda}}) \odot ({\tilde p}\epsilon_b^2 {{\bf{h}}_{r}} {{\bf{h}}_{r}}^H + {\tilde p}(1-\epsilon_b^2){\rm \widetilde{diag}}({{\bf{h}}_{r}}{{\bf{h}}_{r}}^H)) 
$. 
Recalling ${\bm{\theta}}={\bf{Z}}{\bm{\theta}}_{s}$ and  
${\bf{z}} = {\rm vec}({\bf{Z}}^T)$, the reflection phase optimization problem can be equivalently transformed into the following convex optimization problem: 
\begin{align}\label{dis phase}
    {\textbf{(P7)}}: \min_{{\bf{z}}} \quad &  {\bf{z}}^T {\bf{\tilde N}} {\bf{z}} +  2\Re\{ {\bf{z}}^T{\bf{\tilde n}}\} \!-\!{\rho}(2{\bf{z}}\!-\!{\bf{1}})^T(2{\bf{q}}-{\bf{1}})\nonumber \\
    \mbox{s.t.} \quad & {\bf{0}}\leq {\bf{z}} \leq {\bf{1}},{\bf{1}}^T{\bf{z}}_n = 1,\forall n\in \mathcal{N},
\end{align}
where ${\bf{\tilde N}}={\bf{N}}^T \otimes {\bm{\theta}}_{s}{\bm{\theta}}_{s}^H$ and ${\bf{\tilde n}}={\rm vec}( {\bm{\theta}}_{s} {\bf{n}}^H)$.

Since the optimization problems (P5–7) are quadratic convex problems subject to linear and quadratic constraints, the global optimal solutions can be efficiently obtained via the standard CVX solvers.
The penalty $\rho$ is updated with every $T$ iterations to realize a trade-off between the solution accuracy and the computational complexity. 

{\color{blue}Finally, the PEBCD algorithm is summarized in {\bf{Algorithm 1}}. The proposed algorithm is guaranteed to converge to a stationary solution with a sufficiently large penalty. The computational complexity of Algorithm 1 primarily stems from the update of primal variables in step 3, 4, and 6. In step 3, the computation of matrix inversion and multiplication contribute to a complexity of $O(LN_R^2+L^2N_R+L^3)$. Step 4 involves the matrix multiplication with a complexity of $O(N^2N_R+NN_R^2)$. The interior point method of the CVX solver has a complex of $O(\log(1/{\epsilon})(2^{3B}N^3+L^3N_R^3))$, with $\epsilon$ representing the desired accuracy. Therefore, the total complexity of {\bf{Algorithm 1}} can be estimated as $O(I_{BCD}(\log(1/{\epsilon})(2^{3B}N^3+L^3N_R^3)+N^2N_R+NN_R^2))$, where $I_{BCD}$ denotes the number of iterations for the proposed PEBCD algorithm. }
\vspace{-6pt}
\begin{algorithm} \label{Proposed PEBCD Algorithm}
    \normalsize
      \caption{Proposed PEBCD Algorithm }
     \SetKwInOut{Input}{Input}
     \SetKwInOut{Output}{Output}
     \Input{System parameters  $N_R$, $L$, $N$, $B$, $k_t$, $k_r$, $\sigma_b^2$, $\sigma_c^2$, $p$, and the threshold $\epsilon$.}
     \Output{${\bf{w}}^{\star}$,${\bf{A}}^{\star},{\bf{B}}^{\star}$, ${\bm{\Phi}}^{\star},{\bm{\Lambda}}^{\star}$.}

     Initialize $ ({\bf{w}}^{0},{{\mu}}^{0},{\bf{v}}^{0},{\bf{u}}^{0},{\bf{q}}^{0},{\bm{\gamma}}^{0},{\bf{a}}^{0},{\bf{z}}^{0})$, and set the initial penalty ${\rho}$.\\ 
      \Repeat{ Convergence }
      {

       Update the receive beamforming vector ${\bf{w}}^{t}$ by \eqref{receive beamforming}, \\

       Update the amplification factor $\mu^{t}$ by \eqref{mu opt}, \\

       Update the auxiliary variables $\{ {\bf{u}}^{t},{\bf{v}}^{t},{\bf{q}}^{t} \}$ by \eqref{aux var}, \\

       Update the selection variables $\{ {\bm{\gamma}}^{t},{\bf{a}}^{t},{\bf{z}}^{t} \}$ via the CVX solver by solving \eqref{mode sel}, \eqref{ant sel} and \eqref{dis phase},\\
       
       If $(t \mod T) = 0$ \\
       \ \ Increase the penalty $\rho^{t}$ , \\
       End
      }
      Recover ${\bf{A}}^{\star}$, ${\bf{B}}^{\star}$, ${\bm{\Phi}}^{\star}$, ${\bm{\Lambda}}^{\star}$ from ${{\mu}}^{\star}$, ${\bf{v}}^{\star}$, ${\bf{u}}^{\star}$, ${\bf{q}}^{\star}$, ${\bm{\gamma}}^{\star}$, ${\bf{a}}^{\star}$, ${\bf{z}}^{\star}$.        
\end{algorithm}

\vspace{-18pt}
\section{Simulation}
In this section, we conduct numerical simulations to evaluate the performance of the dynamic HRIS-aided communication system. Unless otherwise stated, we set the basic system parameters as follows: $N_R=32$, $L=8$, $N=64$, $B=2$, $k_t=k_r=0.08$ \cite{holma2011lte}. 
{\color{blue}Under a three-dimensional deployment setup, the BS and the HRIS are deployed at $(0,80,5)$m and $(50,50,15)$m, respectively.} The user is located at $(0,0,2)$m.
The transmit power for the user is $p=10$ dBm.
The noise powers are given by $\sigma_b^2=\sigma_c^2=\sigma^2=-80$dBm.
The channel ${\bf{h}}_{d}$ is assumed to be Rayleigh fading, while the HRIS-related channels ${\bf{h}}_{r}$ and ${\bf{G}}$ obey Rician fading with the Rician factor being 0.75.
For the large-scale fading, the path loss model $\text{PL}(d)\!=\!\beta_0 (d)^{-{\alpha}_{p}}$ is considered,  where $d$  and ${\alpha}_{p}$ respectively denote the propagation distance and the path loss exponent. $\beta_0$ stands for the path loss at the reference distance of 1m and is set as $-30$dB.  
In particular,  we set ${\alpha}_{p}^{RB}=2.2$,  ${\alpha}_{p}^{UR}=2.2$ and ${\alpha}_{p}^{UB}=3.5$ for the BS-HRIS channel ${\bf G}$, the HRIS-user channel ${\bf h}_{r}$ and the BS-user channel ${\bf h}_{d}$, respectively.  
The conventional passive RIS and active RIS schemes are adopted as the benchmarks. Besides, the hybrid RIS with a fixed number of active elements, denoted as ``F-HRIS'', is compared to illustrate the superiority of the proposed dynamic HRIS scheme (``D-HRIS''). The non-robust scheme of ignoring the impact of HWIs, represented as ``N-HRIS'', is also considered. {\color{blue}``D-HRIS w/o AS'' denotes the dynamic HRIS-aided MISO communication system with fixed BS antenna selection.}
For a fair comparison, we assume that the total power budgets for different schemes are the same.

    \begin{figure}[t]
    \centering  
     \includegraphics[width=0.35\textwidth]{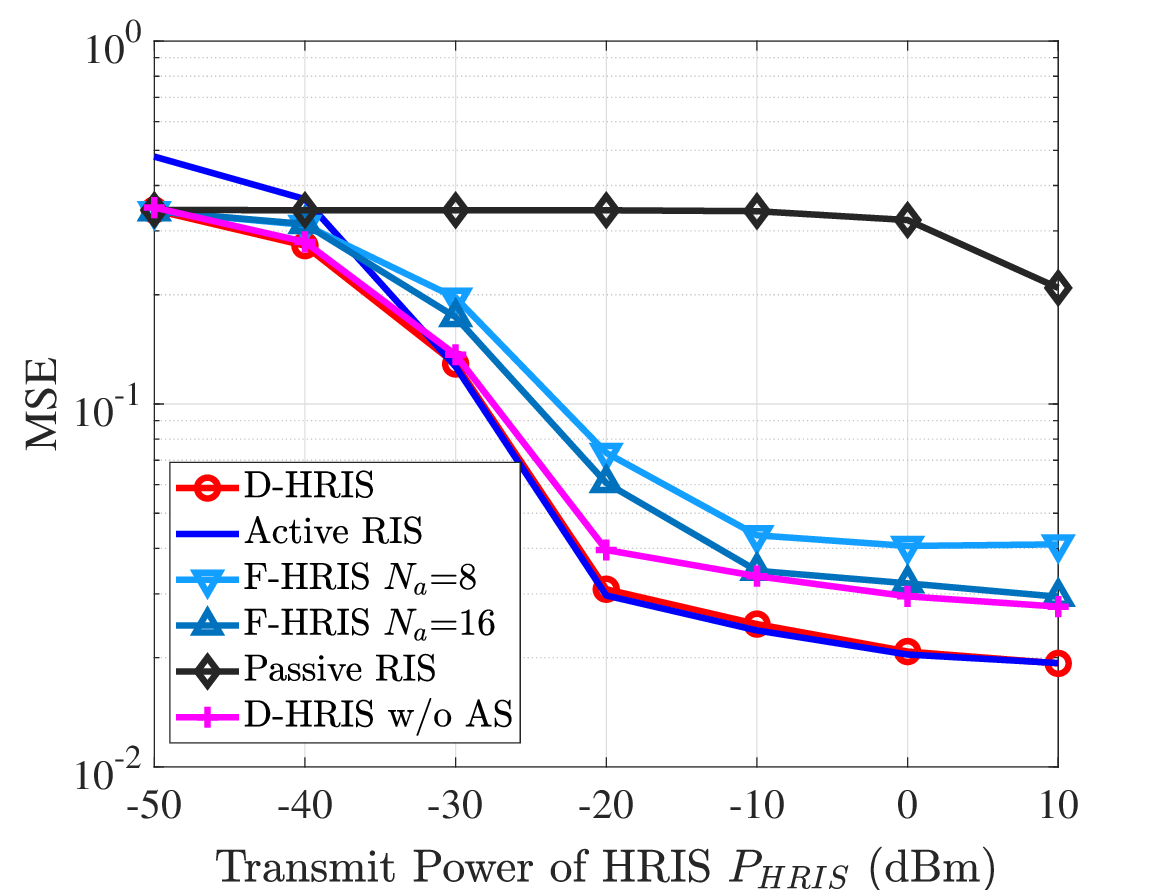}
     \vspace{-8pt}
        \caption{{\color{blue}MSE versus the transmit power of HRIS for different algorithm comparisons.}}
        \vspace{-14pt}
        \label{Algorithm comparison}    
    \end{figure}

    \begin{figure}[t]
    \centering  
     \includegraphics[width=0.35\textwidth]{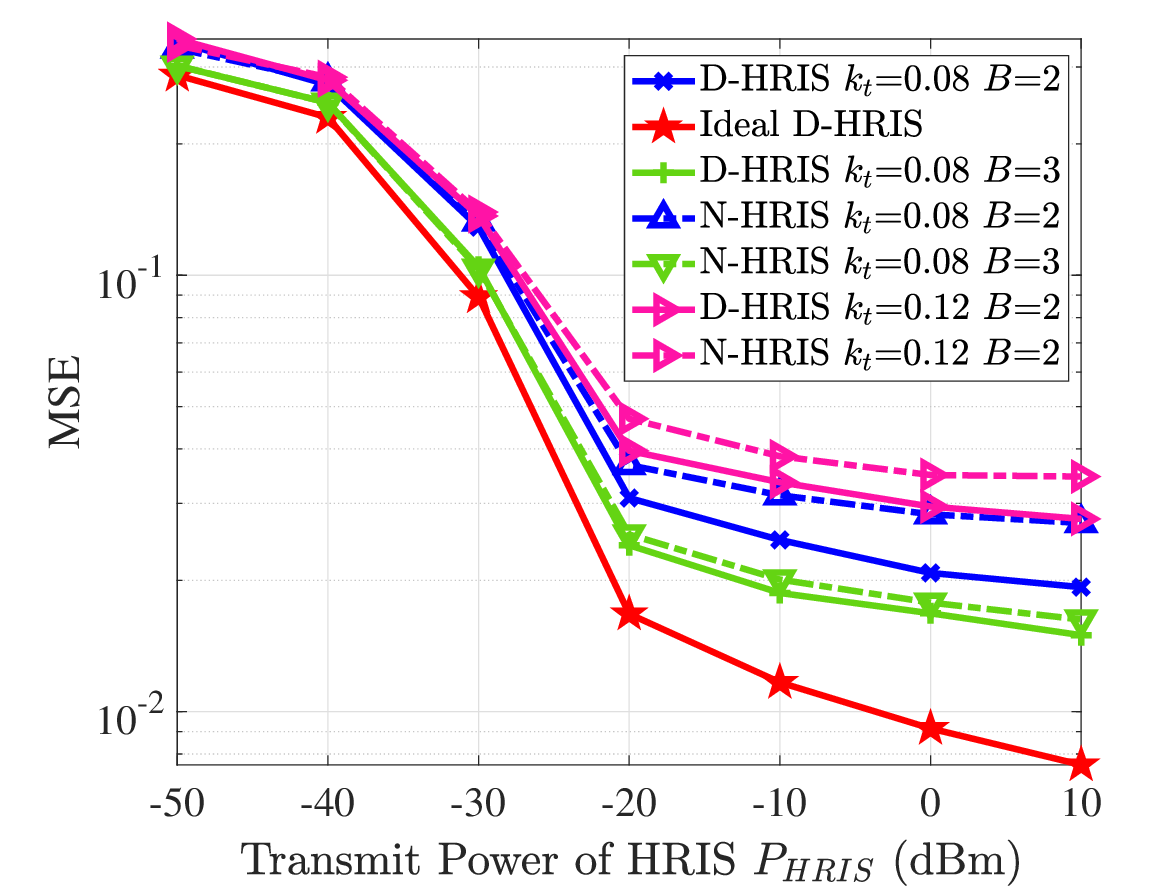}
     \vspace{-8pt}
        \caption{Robust versus nonrobust design under the transmit power of HRIS.}
        \vspace{-14pt}
        \label{Nonrobust}    
    \end{figure}

{\color{blue}  The MSE performance versus the transmit power of HRIS $P_{\rm HRIS}$ for different schemes are compared in Fig.~\ref{Algorithm comparison}. 
In Fig.~\ref{Algorithm comparison}, the performance of the passive RIS scheme improves with an increase in the transmit power of the HRIS, denoted as 
$P_{HRIS}$. This improvement is due to the corresponding increase in the transmit power at the BS for passive RIS to ensure a fair comparison. 
The proposed D-HRIS achieves the same performance as the active RIS in the high transmit power region. In contrast, the D-HRIS performs better than the active RIS in the lower-power regime. It happens because the transmit power is not sufficient to activate all the elements in the active RIS. The performance gap between the D-HRIS and F-HRIS demonstrates the effectiveness of the dynamic configuration for the active elements. Besides, the MSE performance of passive RIS also increases as the transmit power of HRIS increases, since the total power budget increases correspondingly. The performance gap between the ``D-HRIS'' and ``D-HRIS w/o AS'' schemes demonstrates the necessity of BS antenna selection, as it can effectively combat multipath fading by adjusting the antennas.}
In Fig.~\ref{Nonrobust}, the ideal D-HRIS limits the MSE performance of the HWI-aware systems to an upper bound. 
The D-HRIS always achieves better performance than the N-HRIS scheme under various setups of HWIs. 
It showcases the effectiveness of the robust design under imperfect hardware implementations.

\vspace{-6pt}
\section{Conclusion}

In this paper, we investigated the channel-aware placement of receive antennas at the BS and active/passive elements at the HRIS under the imperfect hardware implementation. The average MSE minimization problem for the dynamic HRIS-aided MISO system was studied by jointly optimizing the receive antenna selection matrix, the reflection phase coefficients, the reflection amplitude matrix, and the mode selection matrix for the active elements. To tackle the non-convexity and intractability of this problem, we proposed a PEBCD algorithm to solve these variables alternately. Numerical simulations showed great superiority of the channel-aware configuration for the receive antennas and active/passive elements in the HRIS-aided MISO system as compared to the conventional benchmark schemes.

\vspace{-6pt}
\bibliographystyle{IEEEtran}
\bibliography{HWIs_Ant_RIS}

\vfill

\end{document}